\newtheorem{theorem}{Theorem}
\newtheorem{lemma}{Lemma}
\newcommand{\cX}{\mathcal{X}}
\newcommand{\KL}{D_{\mathrm{KL}}}
\newcommand{\Var}{\mathrm{Var}}
\newcommand{\E}{\mathbb{E}}
\newcommand{\Prb}{\mathbb{P}}
\newcommand{\hb}{h} 
\begin{document}

\title{{Slowdown and saturation of internal time according to the statistics of information input: a minimal model of response systems}}

\author{Tatsuaki Tsuruyama}
\email{tsuruyam@kuhp.kyoto-u.ac.jp}
\affiliation{Department of Physics, Tohoku University, Sendai 980-8578, Japan}
\affiliation{Department of Drug Discovery Medicine, Kyoto University, Kyoto 606-8501, Japan}

\date{\today}

\begin{abstract}
We consider a response system that updates its internal state in accordance with information input arriving from outside. In this paper, we define as internal time the ``number of kinds'' of codes that have been observed at least once up to a given time, and analyze how the way internal time advances is determined by the statistics of information input (arrival rate and code distribution). When arrivals follow a Poisson process, the average advancing speed of internal time decreases monotonically with time, and if the number of kinds of codes is finite, it eventually approaches an upper limit and saturates. As a result, on long time scales, internal time becomes relatively shorter than physical time.
For a uniform code distribution, we provide a closed form for the correspondence between internal time and physical time, and show that the physical time required to ``advance internal time by one step'' increases in later stages. As an ancillary quantity, we quantify by conditional entropy the remaining uncertainty of ``which codes have been observed'' when only internal time is known, and we give unimodality and the maximization time in the uniform case, and upper bounds, equality conditions, and expressions of the difference from the upper bound in the non-uniform case. Finally, we also present a generalization that assigns weights (description lengths) to each code so that internal time is ticked according to the amount of information in the input.

\end{abstract}

\maketitle
\section{Introduction}
Even if the same physical time elapses, it is not necessarily the case that the updating of experience or the progress of learning is constant. In a situation where, among information inputs arriving from outside, the codes that have not yet been observed become fewer, the opportunities for updating themselves decrease; updating slows down and eventually heads toward saturation. This paper formulates this phenomenon as a minimal model of a ``response system'' that updates its internal state according to the statistics of information input, and rigorously derives that \emph{internal time slows down relative to physical time, and saturates if the number of kinds of codes is finite}.

In psychology and cognitive science, there are various studies on time evaluation \cite{Friedman2010,BlockZakayHancock1998}, but this paper does not aim to directly explain those measurement indices. Instead, under a simple setting in which information input arrives from finitely many codes and the response system keeps only a minimal record, we show that a structure in which the advancing speed of internal time decreases with time necessarily emerges from input statistics (arrival rate, code distribution, and weighting). Thus the central conclusion of this paper is that ``the internal time ticked by a response system slows down according to input statistics, and saturates under a finite number of codes.'' That is, internal time is not a mere mapping of physical time, but a response quantity determined by input statistics, and on long time scales internal time becomes relatively shorter than physical time.

\paragraph{Minimality of the response system}
A response system in this paper is a system that updates its internal state whenever an information input arrives and keeps a minimal record for subsequent response. In this paper, the retained record is restricted to ``the kinds of codes observed so far,'' and the number of such kinds is called internal time. Due to this minimality, the slowdown and saturation of internal time appear in a form that can be computed immediately from input statistics.

\paragraph{Organization}
In \S\ref{sec:model} we define the model, and in \S\ref{sec:slowdown} we present slowdown and saturation of internal time as the main result.
In \S\ref{sec:weighted} we describe a weighted generalization of internal time.
In \S\ref{sec:subjective_time} we organize the correspondence between physical time and internal time, and clarify that internal time becomes harder to advance in later stages.
In \S\ref{sec:conditional_entropy} we introduce conditional entropy as an ancillary quantity, and present the closed form, unimodality, and the maximization time in the uniform case, and upper bounds, equality conditions, KL expressions, and approximate evaluations in the non-uniform case.
Finally, in \S\ref{sec:discussion} we discuss the meaning, limitations, and directions of extension.

\section{Model}
\label{sec:model}

In this paper, we call the kinds of external input \textbf{codes}. We call the variable that summarizes, without duplication, the codes that have been observed at least once up to a given time the \textbf{list of observed codes}, and denote it by $S(t)$. We call the number of codes contained in the list of observed codes \textbf{internal time}, and denote it by $\tau(t)$. Arrival times are described by an \textbf{arrival process}, and a code is observed at each arrival. The input statistics referred to in this paper are (i) the arrival rate $\lambda$ (the mean number of arrivals per unit time) and (ii) the code distribution $p=(p_1,\dots,p_K)$ (the relative frequency with which each code is observed). As a further generalization, we introduce (iii) code-dependent weights $\ell_i$ (description lengths).

\subsection{Information input: arrival process and codes}
Information input consists of an arrival-time sequence $\{t_n\}_{n\ge 1}$ and a code sequence $\{X_n\}_{n\ge 1}$.
The possible values of the codes are finite:
\[
\mathcal{X}=\{1,2,\dots,K\}.
\]
Let the number of arrivals up to time $t$ be
\[
N(t):=\#\{n:t_n\le t\}.
\]
As a standard setting, we assume that $N(t)$ is a Poisson process with intensity $\lambda>0$, and that $X_n$ are i.i.d.\ with distribution $p=(p_1,\dots,p_K)$.
For general theory of point processes, see \cite{DaleyVereJones2003,Kingman1993}.

\subsection{Internal state of the response system: list of observed codes and internal time}
Define the variable that summarizes, without duplication, the codes observed at least once up to time $t$ by
\[
S(t):=\{X_1,\dots,X_{N(t)}\}\subseteq \cX
\]
(here we use the mathematical notation \(\{\cdot\}\)).
This is equivalent to the bit sequence indicating whether each code $i$ has been observed:
\begin{equation}
I(t)=(I_1(t),\dots,I_K(t))
\end{equation}
\begin{equation}
I_i(t):=\mathbf{1}\{i\in S(t)\}\in\{0,1\}.
\end{equation}

Define internal time by
\[
\tau(t):=|S(t)|=\sum_{i=1}^K I_i(t).
\]
$\tau(t)$ is an integer-valued stochastic process that represents the ``number of kinds of observed codes.''

\paragraph{Meaning of internal time}
Internal time $\tau(t)$ is not a quantity that counts ``how many times inputs arrived'' like the arrival count $N(t)$, but a quantity that counts ``how many kinds of codes were observed.'' Even with the same number of arrivals, if the same code is repeatedly observed, $\tau(t)$ hardly increases.
This difference is the direct cause of the slowdown and saturation shown later.

\section{Main result: slowdown and saturation of internal time}
\label{sec:slowdown}

In this section we show that, under Poisson arrivals, internal time $\tau(t)$ slows down and, for finite $K$, saturates.
The conclusions here are rigorous results on the mean behavior of internal time in the response system.
The later section on the ancillary quantity is introduced independently of this main result.

\subsection{General distribution  (Poisson arrivals)}
Assume Poisson arrivals (rate $\lambda$) and an i.i.d.\ code sequence (distribution $p$).
The arrival counts $N_i(t)$ belonging to code $i$ are independent and satisfy
\[
N_i(t)\sim\mathrm{Poisson}(\lambda p_i t)
\]
(Poisson splitting; \cite{DaleyVereJones2003,Kingman1993}).
Hence
\[
\Prb(i\in S(t))=\Prb(N_i(t)\ge 1)=:q_i(t)=1-e^{-\lambda p_i t}.
\]
Therefore
\begin{equation}
\E[\tau(t)]=\sum_{i=1}^K q_i(t)
=\sum_{i=1}^K \left(1-e^{-\lambda p_i t}\right),
\label{eq:Etau_general}
\end{equation}
and differentiating,
\begin{equation}
\frac{d}{dt}\E[\tau(t)]
=\lambda\sum_{i=1}^K p_i e^{-\lambda p_i t}.
\label{eq:speed_general}
\end{equation}

\begin{theorem}[Slowdown and saturation (general $p$)]
Under Poisson arrivals, the mean advancing speed $d\E[\tau(t)]/dt$ decreases monotonically in $t$, and
\[
\lim_{t\to\infty}\frac{d}{dt}\E[\tau(t)]=0
\]
holds. Moreover $\E[\tau(t)]\le K$ and
\[
\lim_{t\to\infty}\E[\tau(t)]=K
\]
holds. Consequently,
\[
\lim_{t\to\infty}\frac{\E[\tau(t)]}{t}=0.
\]
\end{theorem}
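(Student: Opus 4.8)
The plan is to take the two closed forms already derived, \eqref{eq:Etau_general} for $\E[\tau(t)]$ and \eqref{eq:speed_general} for its derivative, as the sole starting point: the Poisson-splitting step has already reduced all of the probabilistic content to a finite sum of exponentials, so what remains is a short exercise in one-variable calculus. Throughout I would adopt the standing convention that every code has positive probability, i.e.\ $p_i>0$ for all $i$ (equivalently, the support of $p$ is all of $\cX$); otherwise the stated limit would be the size of the support rather than $K$, and I would flag this explicitly.

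For the monotone slowdown I would differentiate \eqref{eq:speed_general} once more; since the sum is finite the differentiation is term-by-term, giving
\[
\frac{d^2}{dt^2}\E[\tau(t)] = -\lambda^2\sum_{i=1}^K p_i^2\, e^{-\lambda p_i t} < 0 ,
\]
so $t\mapsto d\E[\tau(t)]/dt$ is strictly decreasing. For its limit, each summand $p_i e^{-\lambda p_i t}$ in \eqref{eq:speed_general} tends to $0$ as $t\to\infty$, and a finite sum of null sequences is null, so the mean speed tends to $0$. For $\E[\tau(t)]$ itself, I would use $0\le q_i(t)=1-e^{-\lambda p_i t}\le 1$ to get $\E[\tau(t)]=\sum_i q_i(t)\le K$, and then $q_i(t)\to 1$ (this is where positivity of $p_i$ enters) to get $\E[\tau(t)]\to K$. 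Finally, the last assertion follows by squeezing: from $0\le\E[\tau(t)]\le K$ we obtain $0\le \E[\tau(t)]/t\le K/t\to 0$.

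I do not expect any genuine obstacle here: the argument is elementary, and the only two points deserving a word of care are that exchanging $d/dt$ with the sum needs nothing beyond finiteness of $K$, and that codes of probability zero (if one does not exclude them) contribute $q_i\equiv 0$ and would make $\lim_t\E[\tau(t)]$ strictly smaller than $K$ --- which is precisely why I would state the support assumption at the outset. One could alternatively deduce the final limit via L'H\^opital's rule applied with the computed derivative, but the boundedness-and-squeeze route is shorter and needs no regularity hypotheses.
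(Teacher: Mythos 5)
Your proof is correct and follows essentially the same route as the paper's: term-by-term monotonicity and vanishing of the finite sum of exponentials, boundedness $\E[\tau(t)]\le K$ from $0\le q_i\le 1$, and the squeeze for $\E[\tau(t)]/t\to 0$ (your second-derivative computation is just a slightly more explicit way of getting the monotone decrease). Your added caveat that $p_i>0$ for all $i$ is needed for $\lim_{t\to\infty}\E[\tau(t)]=K$ is a genuine and worthwhile precision that the paper leaves implicit.
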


\begin{proof}
Each term $p_i e^{-\lambda p_i t}$ in \eqref{eq:speed_general} decreases monotonically as $t$ increases, and therefore the sum also decreases monotonically. Since the limit of each term is $0$, the limit of the sum is also $0$.
Also, from $0\le q_i(t)\le 1$, \eqref{eq:Etau_general} implies $\E[\tau(t)]\le K$.
Furthermore, as $t\to\infty$, $q_i(t)\to 1$, hence $\E[\tau(t)]\to K$.
Finally, since $\E[\tau(t)]$ is bounded, $\E[\tau(t)]/t\to 0$.
\end{proof}

Each term of \eqref{eq:speed_general} corresponds to the ``fraction of codes that remain unobserved.'' As time proceeds, the list of observed codes grows, so unobserved codes decrease, and the advancing speed of internal time decreases. This slowdown appears in an explicitly writable exponential form because the arrival process is Poisson. Since internal time $\tau(t)$ is the ``number of types of code observed, it cannot exceed $K$. When time is sufficiently large, every code is observed at least once, so the mean approaches $K$ and saturates. Therefore, on long time scales, no matter how much physical time advances, the internal time hardly advances, and the internal time becomes relatively shorter than the physical time.

\subsection{Uniform distribution {$p_i=1/K$}: closed form}
In the uniform case $p_i=1/K$, all $q_i(t)$ are equal:
\[
q(t)=1-e^{-\lambda t/K}.
\]
Therefore,
\begin{equation}
\E[\tau(t)]=K\bigl(1-e^{-\lambda t/K}\bigr),
\label{eq:Etau_uniform}
\end{equation}
\begin{equation}
\frac{d}{dt}\E[\tau(t)]=\lambda e^{-\lambda t/K},
\label{eq:speed_uniform}
\end{equation}
and the mean advancing speed decreases exponentially.

\paragraph{Time scale}
Equation \eqref{eq:Etau_uniform} shows that the natural time scale is $K/\lambda$.
The larger $K$ is, the longer physical time is needed until saturation, and the larger $\lambda$ is, the earlier saturation occurs.
This $K/\lambda$ becomes the reference when discussing the correspondence between physical time and internal time in a later section.

\section{Weighted internal time: a coding-consistent generalization}
\label{sec:weighted}

Internal time $\tau(t)$ is the minimal definition that counts the ``number of kinds of observed codes.''
On the other hand, even for the same ``observation of a new code,'' observation of a rare code carries a larger amount of information.
To incorporate this point, we assign weights (description lengths) to codes and introduce, as another internal time, the total weight contained in the list of observed codes.
Assign a positive weight (description length) $\ell_i>0$ to each code $i$, and define
\begin{equation}
T(t):=\sum_{i\in S(t)} \ell_i
\label{eq:T_def}
\end{equation}
as \textbf{weighted internal time}.
$T(t)$ corresponds to ``the total description length of codes contained in the list of observed codes,'' and while $\tau(t)$ counts only the ``number of kinds,'' $T(t)$ advances in a way that depends on ``which codes were observed.''

\subsection{Mean and slowdown under Poisson arrivals}
Under Poisson arrivals, since $q_i(t)=\Prb(i\in S(t))=1-e^{-\lambda p_i t}$,
\begin{equation}
\E[T(t)] = \sum_{i=1}^K \ell_i\, q_i(t)
= \sum_{i=1}^K \ell_i\left(1-e^{-\lambda p_i t}\right),
\label{eq:ET_general}
\end{equation}
and the mean advancing speed is
\begin{equation}
\frac{d}{dt}\E[T(t)] = \lambda\sum_{i=1}^K \ell_i p_i e^{-\lambda p_i t}.
\label{eq:ET_speed}
\end{equation}
Thus each term decreases monotonically as $t$ increases, and so does the sum.
Hence, weighted internal time also slows down in mean. For finite approaches $K$, $\E[T(t)]$, such as $t\to\infty$,
\[
\lim_{t\to\infty}\E[T(t)]=\sum_{i=1}^K \ell_i.
\]
That is, for finitely many codes, $T(t)$ also saturates toward an upper limit. In this sense, the saturation of $\tau(t)$ is saturation due to ``the upper limit of the number of kinds,'' while the saturation of $T(t)$ is saturation due to ``the upper limit of total description length.''

\subsection{Coding-consistency condition (requirement of entropy coding)}
As one way to choose weights $\ell_i$ consistently with the code distribution $p_i$, assume
\begin{equation}
-\log p_i = \beta \ell_i + \mathrm{const}, \qquad \beta>0,
\label{eq:coding_relation}
\end{equation}
This is a requirement that ``rarer codes have larger description length,'' and is consistent with coding in information theory (the correspondence between self-information $-\log p_i$ and description length) \cite{Shannon1948,CoverThomas2006}.

Ignoring the constant term \eqref{eq:coding_relation} means that $\ell_i$ is proportional to $-\log p_i$.
In this case, $T(t)$ can be interpreted as a natural representative quantity for ``the total self-information retained by the list of observed codes.''

\paragraph{Consistency with the uniform distribution}
For the uniform distribution $p_i=1/K$, $-\log p_i=\log K$ is constant independently of $i$.
Therefore, \eqref{eq:coding_relation} requires that all $\ell_i$ be equal.
Then $T(t)$ becomes a constant multiple of $\tau(t)$, and the weighted internal time is a generalization that contains ``internal time as the number of kinds.''

\paragraph{Implication for non-uniform distributions}
For non-uniform distributions, rarer codes have larger $\ell_i$, so if a rare code is observed in a later stage, $T(t)$ increases largely.
 Rare codes are harder to observe, so their contributions appear late. Therefore, $T(t)$ tends to have a form in which ``it advances relatively fast in an early stage, but the later-stage growth remains for a long time.'' This property appears more intuitively in a later section on Zipf-law input.

\section{Correspondence between physical time and internal time}
\label{sec:subjective_time}

We first give a closed form in the uniform case, and then describe the distortion for non-uniform distributions.

\subsection{Uniform reference system: nonlinear correspondence between $t$ and $\tau$}
In the uniform case, the mean of internal time is given by \eqref{eq:Etau_uniform}.
In an early stage, $\E[\tau(t)]\approx \lambda t$ increases almost linearly, but as time proceeds, the increase slows down and $\E[\tau(t)]\to K$ saturates.

\begin{figure}[t]
  \centering
  \includegraphics[width=\columnwidth]{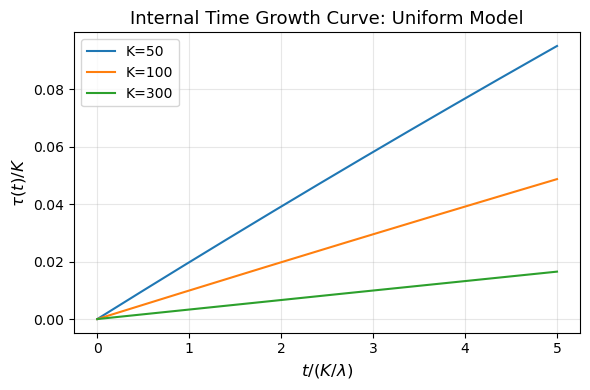}
  \caption{
    Correspondence between the mean internal time $\E[\tau(t)]$ and physical time $t$ in the uniform case.
    It increases rapidly in an early stage, and saturates in a later stage so that the advance of internal time is strongly slowed down.
  }
  \label{fig:tau_vs_t}
\end{figure}

\paragraph{What does ``relatively shorter'' mean?}
Since $\E[\tau(t)]$ is bounded, $\E[\tau(t)]/t\to 0$ holds as $t\to\infty$.
This means that ``on long time scales, even if physical time advances, internal time hardly advances.''
In this sense, internal time becomes relatively shorter than physical time.

\subsection{Inverse mapping: physical time required for an internal-time level}
Inverting \eqref{eq:Etau_uniform}, the physical time required to reach an internal-time level $\tau$ is
\begin{equation}
t(\tau)=\frac{K}{\lambda}
\log\!\left(\frac{K}{K-\tau}\right)
\label{eq:conversion}
\end{equation}
(which diverges as $\tau\to K$).
The physical-time increment corresponding to one unit of internal time is
\begin{equation}
\Delta t_{\mathrm{phys}}(\tau)=
t(\tau+1)-t(\tau)
=\frac{K}{\lambda}
\log\!\left(
\frac{K-\tau}{K-\tau-1}
\right),
\label{eq:dt_phys}
\end{equation}
which increases sharply as $\tau\to K$.

\begin{figure}[t]
  \centering
  \includegraphics[width=\columnwidth]{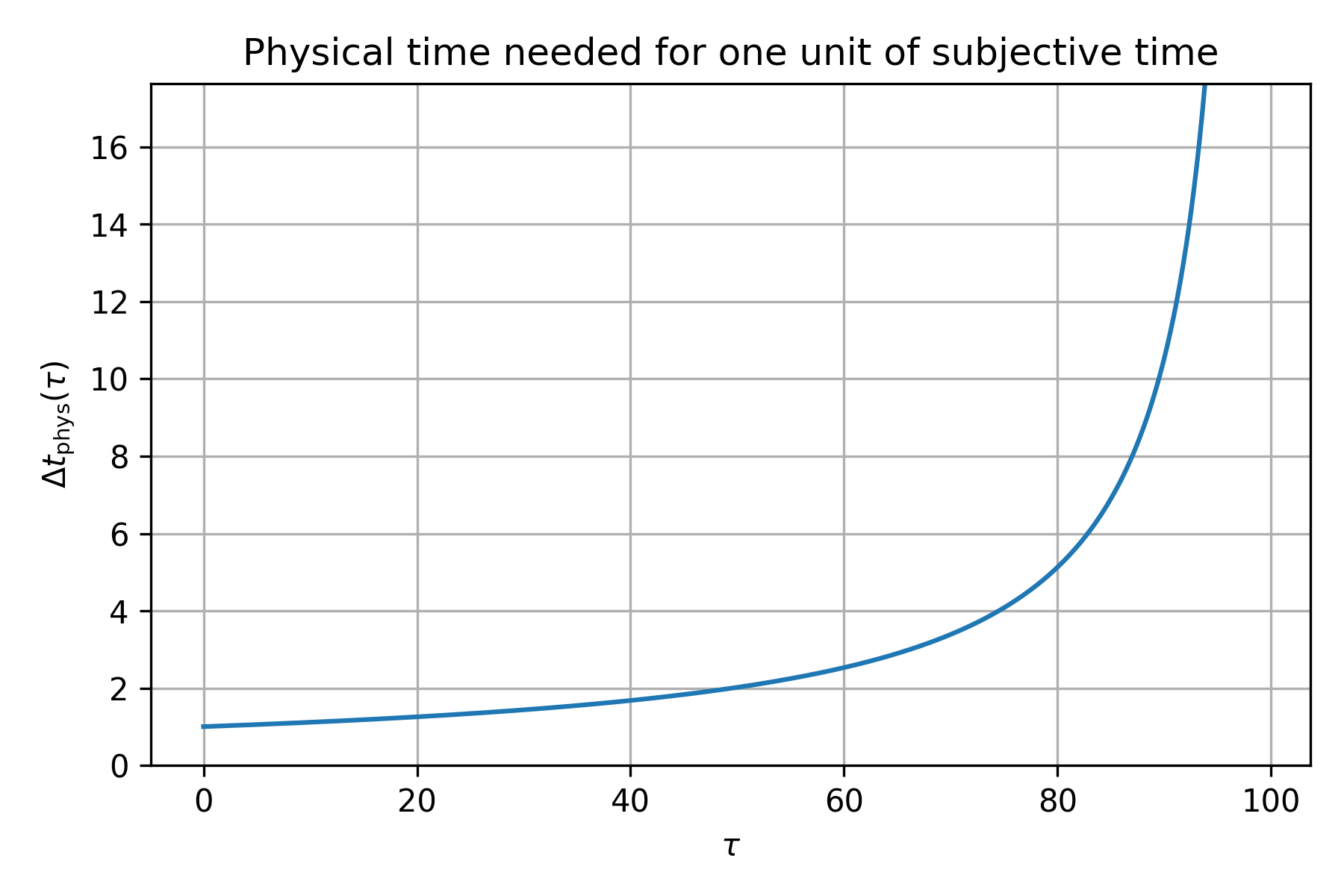}
  \caption{
    Physical time $\Delta t_{\mathrm{phys}}(\tau)$ required to obtain ``one unit of internal time'' in the uniform case (Eq.~\eqref{eq:dt_phys}).
    It increases sharply in later stages, showing that the advance of internal time is strongly slowed down.
  }
  \label{fig:dt_phys}
\end{figure}

\paragraph{Rephrasing}
Equation \eqref{eq:dt_phys} concretely expresses that the ``physical time needed to advance internal time by the same amount'' increases in later stages. This ``increase of required physical time'' is the most direct formalization of the claim that internal time becomes relatively shorter than physical time.

\subsection{Non-uniform distributions: distortion of $\E[\tau(t)]$}
For a general distribution $p$, \eqref{eq:Etau_general} holds.
Then the initial slope is
\[
\left.\frac{d}{dt}\E[\tau(t)]\right|_{t=0}=\lambda\sum_{i=1}^K p_i=\lambda,
\]
so in an early stage it increases approximately as $\lambda t$ independently of the distribution. On the other hand, the later-stage shape depends strongly on $p$.

Since high-probability codes are observed early and low-probability codes remain unobserved until late times, compared with the uniform case, a distortion can occur in which ``the early rise is fast, but the later tail is long.'' This distortion arises because, depending on the magnitude of $\lambda p_i t$, each code has a different rise in the observation probability $q_i(t)=1-e^{-\lambda p_i t}$.

\subsection{Heavy-tailed (Zipf-law) input: delayed saturation and a long tail}
If the statistics of information input has a heavy tail, the way internal time saturates changes significantly.
As a representative example of a heavy tail, consider the Zipf law (a power-law rank-frequency distribution) \cite{Newman2005,Zipf1949}.
Here, for rank $i$ of a code ($i=1$ is the most frequent; larger $i$ is rarer), assume the truncated Zipf distribution
\begin{equation}
p_i=\frac{i^{-\alpha}}{\sum_{j=1}^K j^{-\alpha}},\qquad i=1,\dots,K,\ \alpha>0
\label{eq:zipf}
\end{equation}
(for practical notes on estimation for truncated distributions, see \cite{ClausetShaliziNewman2009}).

Under Poisson arrivals, the probability that code $i$ has been observed at least once up to time $t$ is
$q_i(t)=1-e^{-\lambda p_i t}$, and therefore the mean internal time is
\begin{equation}
\E[\tau(t)] = \sum_{i=1}^K \left(1-e^{-\lambda p_i t}\right)
\label{eq:Etau_zipf}
\end{equation}
(a concrete example of \eqref{eq:Etau_general}). Under the Zipf law, there are many rare codes with small $p_i$, so they tend to remain unobserved over long periods. As a result, while $\E[\tau(t)]$ increases relatively fast in an early stage, in a later stage, the ``last few kinds'' are hard to fill, and the tail toward saturation becomes long.

\paragraph{Increase in an intermediate time window (qualitative scaling)}
To intuitively read the increase in an intermediate time window from \eqref{eq:Etau_zipf}, split contributions by whether $\lambda p_i t$ exceeds $1$.
For codes up to rank $i$ satisfying $\lambda p_i t \gtrsim 1$, we have $q_i(t)\approx 1$, while for rarer codes we have $q_i(t)\approx \lambda p_i t$.
Under Zipf law $p_i\propto i^{-\alpha}$, the boundary rank $i_c(t)$ tends to satisfy approximately
\begin{equation}
  \lambda p_{i_c} t \approx 1  
\end{equation}
then,

$i_c(t)$\ \text{tends to increase in proportion to a power of}\ $t$.

Therefore, in an intermediate time window well before saturation, $\E[\tau(t)]$ can show a slower-than-linear increase (a power-like increase) in $t$.
In particular, as $\alpha$ increases, probability concentrates on the top few codes, so the early rise becomes fast, but observation of rare codes becomes even more difficult, and the later tail tends to become longer.

\begin{figure}[t]
  \centering
  \includegraphics[width=\columnwidth]{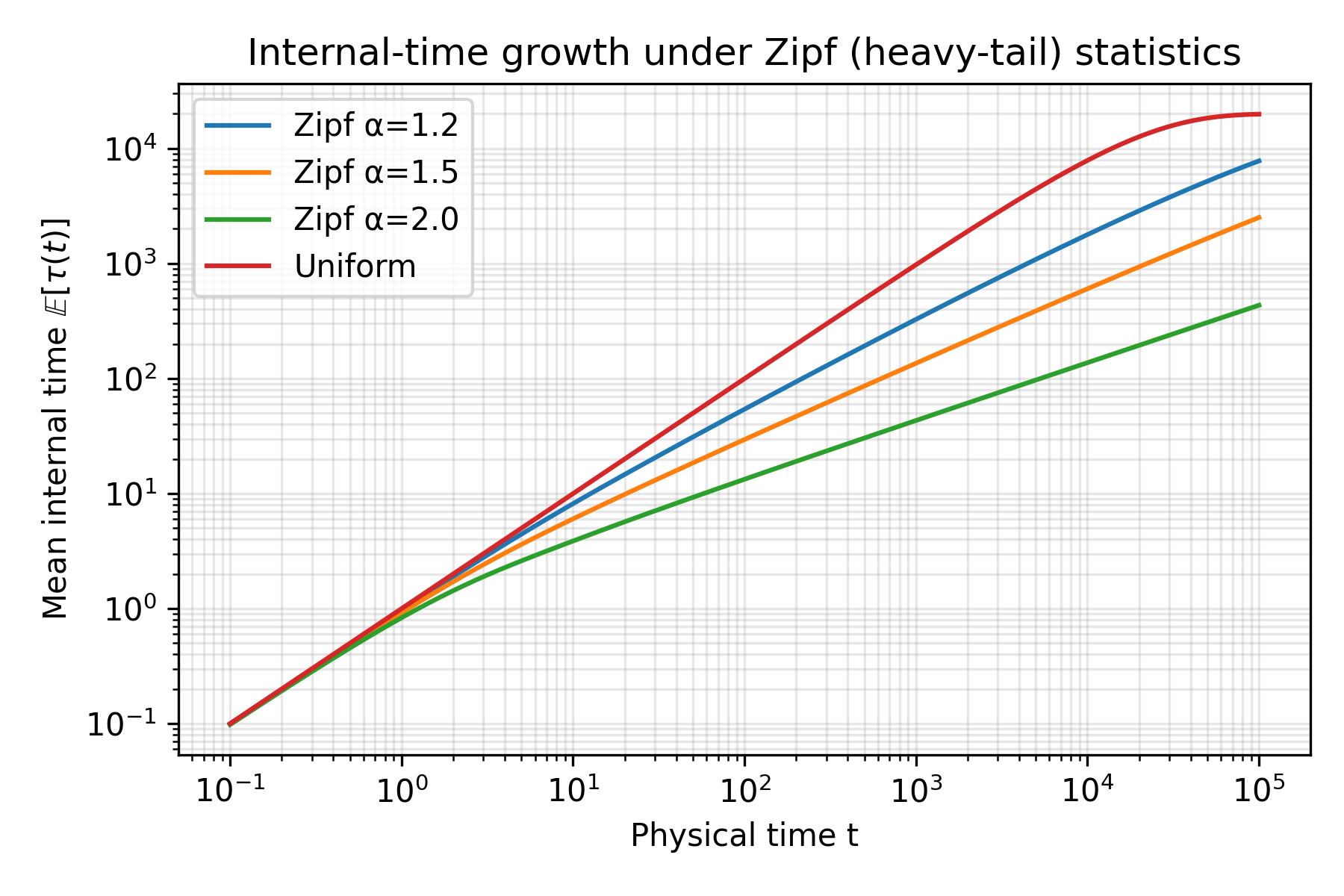}
  \caption{
    Mean internal time $\E[\tau(t)]$ in Zipf-law input (Eq.~\eqref{eq:zipf}) (Eq.~\eqref{eq:Etau_zipf}).
    Due to the heavy tail, the later-stage tail toward saturation tends to become long.
  }
  \label{fig:tau_zipf}
\end{figure}


\label{sec:conditional_entropy}

Internal time $\tau(t)$ retains only the ``number of kinds of observed codes,''
so even for the same $\tau(t)$, ``which codes have been observed'' is not determined in general.
As a quantity to measure what remains, we use the conditional entropy $H(S(t)\mid \tau(t))$. This conditional entropy is introduced to visualize, in an auxiliary way, ``what is lost when only internal time is retained.''

\subsection{Definition and identity}
Let the code sequence up to time $t$ be $X^{N(t)}=(X_1,\dots,X_{N(t)})$.
To compare different reductions, define the following two conditional entropies:
\[
L_A(t):=H\!\left(X^{N(t)}\mid S(t)\right),\qquad
L_B(t):=H\!\left(X^{N(t)}\mid \tau(t)\right),
\]
where $H(\cdot)$ is the Shannon entropy (natural logarithm) \cite{Shannon1948,CoverThomas2006}.
Introduce the difference
\[
U(t):=L_B(t)-L_A(t).
\]
Then, by the chain rule, the following holds.

\begin{theorem}[Identity]
Under any probabilistic model,
\begin{equation}
U(t)=H\!\left(S(t)\mid\tau(t)\right).
\label{eq:identity}
\end{equation}
\end{theorem}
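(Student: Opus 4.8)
The plan is to derive \eqref{eq:identity} purely from the chain rule for (conditional) entropy, using two structural facts about the reductions involved: $S(t)$ is a deterministic function of the code sequence $X^{N(t)}$ (it is the set of values occurring in it), and $\tau(t)=|S(t)|$ is a deterministic function of $S(t)$. No assumption on the arrival process or on $p$ is needed beyond what makes the entropies well defined, which is why the statement can be advertised as holding ``under any probabilistic model.''

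First I would expand the joint conditional entropy $H\!\bigl(X^{N(t)},S(t)\mid\tau(t)\bigr)$ in the two orders allowed by the chain rule:
\[
H\!\bigl(X^{N(t)},S(t)\mid\tau(t)\bigr)
= H\!\bigl(S(t)\mid\tau(t)\bigr) + H\!\bigl(X^{N(t)}\mid S(t),\tau(t)\bigr)
\]
and
\[
H\!\bigl(X^{N(t)},S(t)\mid\tau(t)\bigr)
= H\!\bigl(X^{N(t)}\mid\tau(t)\bigr) + H\!\bigl(S(t)\mid X^{N(t)},\tau(t)\bigr).
\]
The next step is to collapse the superfluous conditioning variables. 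Because $\tau(t)$ is a function of $S(t)$, conditioning on the pair $(S(t),\tau(t))$ is the same as conditioning on $S(t)$ alone, so $H\!\bigl(X^{N(t)}\mid S(t),\tau(t)\bigr)=H\!\bigl(X^{N(t)}\mid S(t)\bigr)=L_A(t)$. Because $S(t)$ is a function of $X^{N(t)}$, the entropy of $S(t)$ given $X^{N(t)}$ (and anything else) vanishes, so $H\!\bigl(S(t)\mid X^{N(t)},\tau(t)\bigr)=0$ and the second expansion reduces to $H\!\bigl(X^{N(t)}\mid\tau(t)\bigr)=L_B(t)$. Equating the two expansions gives $L_B(t)=H\!\bigl(S(t)\mid\tau(t)\bigr)+L_A(t)$, hence $U(t)=L_B(t)-L_A(t)=H\!\bigl(S(t)\mid\tau(t)\bigr)$, which is \eqref{eq:identity}.

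There is no obstacle of real substance; the only point deserving a word of care is that $X^{N(t)}$ is a random vector of random (almost surely finite) length and thus takes countably many values. One should therefore either observe that all entropies appearing are finite — which is the case in the standing Poisson i.i.d.\ model, since $H\!\bigl(X^{N(t)}\bigr)\le H\!\bigl(N(t)\bigr)+\E[N(t)]\,H(p)<\infty$ — or invoke the fact that the chain rule for entropy remains valid for arbitrary countably valued random variables with the usual conventions when infinities occur. With that caveat dispatched, \eqref{eq:identity} is an exact information-theoretic identity with no dependence on $\lambda$, on $p$, or on the nature of the point process, exactly as claimed; it also makes transparent that the ``loss'' $U(t)$ is precisely the residual uncertainty about which codes were seen once their number $\tau(t)$ is revealed.
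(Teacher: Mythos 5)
Your proof is correct and follows essentially the same route as the paper: expand $H\bigl(X^{N(t)},S(t)\mid\tau(t)\bigr)$ by the chain rule in both orders, kill $H\bigl(S(t)\mid X^{N(t)},\tau(t)\bigr)$ because $S(t)$ is a function of $X^{N(t)}$, and drop $\tau(t)$ from the conditioning because it is a function of $S(t)$. Your added remark on finiteness of the entropies for the countably-valued $X^{N(t)}$ is a sensible extra precaution that the paper omits, but it does not change the argument.
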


\begin{proof}
$\tau(t)$ is a function of $S(t)$, and $S(t)$ is a function of $X^{N(t)}$.
By the chain rule,
\[
H\!\left(X^{N(t)},S(t)\mid\tau(t)\right)
=H\!\left(S(t)\mid\tau(t)\right)+H\!\left(X^{N(t)}\mid S(t),\tau(t)\right).
\]
On the other hand, since $S(t)$ is a function of $X^{N(t)}$, from $H(S(t)\mid X^{N(t)},\tau(t))=0$ we have
\[
H\!\left(X^{N(t)}\mid\tau(t)\right)
=H\!\left(X^{N(t)},S(t)\mid\tau(t)\right).
\]
Also, since $\tau(t)$ is a function of $S(t)$,
$H(X^{N(t)}\mid S(t),\tau(t))=H(X^{N(t)}\mid S(t))$.
Therefore,
\[
H(X^{N(t)}\mid\tau(t))=H(S(t)\mid\tau(t))+H(X^{N(t)}\mid S(t)),
\]
and subtracting $H(X^{N(t)}\mid S(t))$ from both sides yields the conclusion.
\end{proof}

\paragraph{Meaning of the identity}
$U(t)$ measures the difference between ``keeping the list of observed codes $S(t)$'' and ``keeping only internal time $\tau(t)$.'' The identity \eqref{eq:identity} shows that this difference coincides exactly with ``how much uncertainty remains in the list of observed codes when only internal time is known.'' Therefore, $H(S(t)\mid\tau(t))$ is a natural quantity that describes the information dropped by the reduction to internal time.

\subsection{Uniform distribution: closed form}
Consider the case where arrivals are Poisson and codes are uniform $p_i=1/K$.
By Poisson splitting,
\begin{equation}
N_i(t)\sim \mathrm{Poisson}\!\left(\frac{\lambda t}{K}\right)\ \text{independently}
\end{equation}
\begin{equation}
\Prb(i\in S(t))=:q(t)=1-e^{-\lambda t/K}.
\end{equation}
Therefore,
\[
\tau(t)=\sum_{i=1}^K \mathbf{1}\{N_i(t)\ge 1\}\sim \mathrm{Binomial}\!\bigl(K,q(t)\bigr).
\]

\begin{theorem}[Conditional equiprobability under the uniform distribution]
Under the uniform distribution, conditioning on $\tau(t)=m$, all ``candidates obtained by choosing $m$ codes from $K$ codes'' are equally likely.
Therefore,
\begin{equation}
H\!\left(S(t)\mid\tau(t)=m\right)=\log\binom{K}{m},
\label{eq:cond_uniform}
\end{equation}
and
\begin{equation}
H\!\left(S(t)\mid\tau(t)\right)=\E\!\left[\log\binom{K}{\tau(t)}\right]
\label{eq:U_uniform}
\end{equation}
holds.
\end{theorem}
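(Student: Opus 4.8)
The plan is to read off the result from the permutation symmetry of the uniform code distribution, made concrete via Poisson splitting. First I would recall that under Poisson arrivals with uniform $p_i=1/K$ the per-code counts $N_1(t),\dots,N_K(t)$ are independent with $N_i(t)\sim\mathrm{Poisson}(\lambda t/K)$, so for any fixed subset $A\subseteq\cX$ the event $\{S(t)=A\}$ equals $\{N_i(t)\ge 1\ \text{for}\ i\in A\}\cap\{N_i(t)=0\ \text{for}\ i\notin A\}$. By independence,
\[
\Prb\bigl(S(t)=A\bigr)=\prod_{i\in A}\bigl(1-e^{-\lambda t/K}\bigr)\prod_{i\notin A}e^{-\lambda t/K}=q(t)^{|A|}\bigl(1-q(t)\bigr)^{K-|A|},
\]
and the crucial point is that this depends on $A$ only through $|A|$ — precisely because $q(t)=1-e^{-\lambda t/K}$ is the same for every code under the uniform law.

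Second, I would fix $m$ with $0\le m\le K$, assume $t>0$ so that $q(t)\in(0,1)$ and the conditioning is well defined, and condition on $\tau(t)=m$, i.e.\ on $\{|S(t)|=m\}$. Using $\tau(t)\sim\mathrm{Binomial}\bigl(K,q(t)\bigr)$, so that $\Prb(\tau(t)=m)=\binom{K}{m}q(t)^m\bigl(1-q(t)\bigr)^{K-m}$, Bayes's rule gives
\[
\Prb\bigl(S(t)=A\mid\tau(t)=m\bigr)=\frac{q(t)^m\bigl(1-q(t)\bigr)^{K-m}}{\binom{K}{m}q(t)^m\bigl(1-q(t)\bigr)^{K-m}}=\binom{K}{m}^{-1}
\]
for every $A$ with $|A|=m$. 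Hence the conditional law of $S(t)$ given $\tau(t)=m$ is uniform on the $\binom{K}{m}$ subsets of size $m$, and its Shannon entropy is $\log\binom{K}{m}$, which is \eqref{eq:cond_uniform}; the degenerate cases $m=0$ and $m=K$ are consistent, since then $\binom{K}{m}=1$ and $S(t)$ is determined by $\tau(t)$.

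Finally I would assemble \eqref{eq:U_uniform} by averaging over the value of $\tau(t)$ in the definition of conditional entropy:
\[
H\bigl(S(t)\mid\tau(t)\bigr)=\sum_{m=0}^{K}\Prb(\tau(t)=m)\,H\bigl(S(t)\mid\tau(t)=m\bigr)=\sum_{m=0}^{K}\Prb(\tau(t)=m)\log\binom{K}{m}=\E\bigl[\log\tbinom{K}{\tau(t)}\bigr].
\]
I do not anticipate a genuine obstacle: the only care needed is checking $q(t)\in(0,1)$ for $t>0$ so the conditional probabilities are legitimate, and noting the finiteness of the range of $S(t)$ so the entropies are well defined. The argument can equivalently be phrased structurally: relabelling codes by any permutation of $\cX$ leaves the joint law of $(N_1(t),\dots,N_K(t))$ invariant, hence leaves the law of $S(t)$ invariant, and any permutation-invariant distribution on subsets of $\cX$, restricted to a fixed cardinality, must be the uniform one.
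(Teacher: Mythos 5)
Your proof is correct and follows essentially the same route the paper relies on: Poisson splitting makes the indicators $I_i(t)$ independent Bernoulli$(q(t))$ with a common $q(t)$ under the uniform law, so $\Prb(S(t)=A)$ depends only on $|A|$ and conditioning on $\tau(t)=m$ yields the uniform distribution on $m$-subsets (the paper carries out exactly this computation for general $p_i$ in its later section on the equality condition, where $\Prb(S(t)=A\mid\tau(t)=m)\propto\prod_{i\in A}q_i/(1-q_i)$). Your care about $q(t)\in(0,1)$ and the degenerate cases $m=0,K$ is a sensible addition but not a departure from the paper's argument.
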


\begin{lemma}[Symmetry]
Writing \eqref{eq:U_uniform} as a function of $q\in[0,1]$ by $U(q)$, we have
\[
U(q)=U(1-q).
\]
\end{lemma}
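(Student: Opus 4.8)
The plan is to exploit two elementary symmetries visible in the closed form \eqref{eq:U_uniform}: the symmetry $\binom{K}{m}=\binom{K}{K-m}$ of binomial coefficients, and the fact that complementing a $\mathrm{Binomial}(K,q)$ variable produces a $\mathrm{Binomial}(K,1-q)$ variable. Since $U(q)$ is a finite sum of only $K+1$ terms, no convergence issue arises and the entire argument reduces to a reindexing.

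First I would write $U(q)$ explicitly, using $\tau(t)\sim\mathrm{Binomial}(K,q)$, as
\[
U(q)=\sum_{m=0}^{K}\binom{K}{m}\,q^{m}(1-q)^{K-m}\,\log\binom{K}{m}.
\]
Then I would substitute $m\mapsto K-m$. The probability weight becomes $\binom{K}{K-m}q^{K-m}(1-q)^{m}$, which equals $\binom{K}{m}q^{K-m}(1-q)^{m}$ by $\binom{K}{K-m}=\binom{K}{m}$; and the factor $\log\binom{K}{m}$ is carried to $\log\binom{K}{K-m}=\log\binom{K}{m}$, hence is unchanged. The reindexed sum is exactly the definition of $U(1-q)$, so $U(q)=U(1-q)$.

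Equivalently, and perhaps more transparently, I would argue at the level of random variables: if $\tau(t)\sim\mathrm{Binomial}(K,q)$ then $K-\tau(t)\sim\mathrm{Binomial}(K,1-q)$, and pointwise $\binom{K}{\tau(t)}=\binom{K}{K-\tau(t)}$, so taking expectations gives $U(q)=\E\bigl[\log\binom{K}{K-\tau(t)}\bigr]=U(1-q)$. I do not anticipate any genuine obstacle here; the only steps needing a line of justification are the identity $\binom{K}{m}=\binom{K}{K-m}$ and the complementation property of the binomial law, both of which are standard.
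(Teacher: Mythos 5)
Your argument is correct and is essentially the paper's own proof (Appendix A), which likewise combines $\binom{K}{m}=\binom{K}{K-m}$ with the symmetry of the binomial weights under $m\mapsto K-m$; your explicit reindexing just spells out that one line. No gaps.
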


\begin{theorem}[Unimodality and maximization time]
Under the uniform distribution, $H(S(t)\mid\tau(t))$ is unimodal in $t$, and the unique maximizer is
\begin{equation}
t^*=\frac{K}{\lambda}\log 2
\label{eq:tstar}
\end{equation}
\end{theorem}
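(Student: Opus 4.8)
The plan is to reduce the claim to a one–variable statement. Since $q(t)=1-e^{-\lambda t/K}$ is a smooth, strictly increasing bijection of $[0,\infty)$ onto $[0,1)$ and $\tau(t)\sim\mathrm{Binomial}(K,q(t))$, the quantity in \eqref{eq:U_uniform} is $H(S(t)\mid\tau(t))=f(q(t))$ where
\[
f(q):=\sum_{m=0}^{K}\binom{K}{m}q^{m}(1-q)^{K-m}\log\binom{K}{m}=\E\!\left[\log\binom{K}{Z_q}\right],\qquad Z_q\sim\mathrm{Binomial}(K,q).
\]
Thus $f$ is a polynomial in $q$, hence $C^{\infty}$ on $[0,1]$, and unimodality in $t$ with maximizer $t^{*}$ is equivalent to unimodality of $f$ on $[0,1]$ with maximizer $q^{*}=q(t^{*})$. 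Because $q(t^{*})=1-e^{-\log 2}=\tfrac12$, it suffices to show $f$ is strictly concave on $[0,1]$: together with the symmetry $f(q)=f(1-q)$ of the Symmetry Lemma, strict concavity forces the unique maximizer to be $q=\tfrac12$, which transports back to $t^{*}=(K/\lambda)\log 2$. I assume $K\ge 2$ throughout; for $K=1$ the conditional entropy is identically $0$ and the statement is vacuous.

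Next I would compute derivatives using the Bernstein/binomial identity: for any $g$ and $Z_n\sim\mathrm{Binomial}(n,q)$,
\[
\frac{d}{dq}\,\E[g(Z_n)]=n\,\E\!\left[g(Z_{n-1}+1)-g(Z_{n-1})\right],
\]
obtained by differentiating the basis polynomials $\binom{n}{m}q^{m}(1-q)^{n-m}$ term by term and telescoping. Applying this with $g(m)=\log\binom{K}{m}$ and using $\binom{K}{m+1}/\binom{K}{m}=(K-m)/(m+1)$ gives
\[
f'(q)=K\,\E\!\left[\psi(Y)\right],\qquad \psi(y):=\log\frac{K-y}{y+1},\qquad Y\sim\mathrm{Binomial}(K-1,q),
\]
and a second application gives $f''(q)=K(K-1)\,\E[\psi(W+1)-\psi(W)]$ with $W\sim\mathrm{Binomial}(K-2,q)$. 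A direct computation yields $\psi(w+1)-\psi(w)=\log\frac{(K-w-1)(w+1)}{(w+2)(K-w)}<0$ for every admissible $w$, since $(w+2)(K-w)-(K-w-1)(w+1)=K+1>0$; hence $f''(q)<0$ for all $q\in(0,1)$, i.e.\ $f$ is strictly concave.

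Finally I would assemble the pieces. Strict concavity makes $f'$ strictly decreasing, and differentiating $f(q)=f(1-q)$ gives $f'(\tfrac12)=0$; therefore $f'>0$ on $[0,\tfrac12)$ and $f'<0$ on $(\tfrac12,1]$, so $f$ rises then falls with unique maximizer $q=\tfrac12$. Pulling this back through the increasing map $q=q(t)$ shows $H(S(t)\mid\tau(t))$ increases on $[0,t^{*})$ and decreases on $(t^{*},\infty)$ with $t^{*}=(K/\lambda)\log 2$. (An equivalent route avoids $f''$: $\psi$ is strictly decreasing and $\mathrm{Binomial}(K-1,q)$ is strictly stochastically increasing in $q$, so $f'(q)=K\,\E[\psi(Y_q)]$ is strictly decreasing, giving the same conclusion.)

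I expect the only friction to be bookkeeping: justifying the binomial-derivative identity with the index shifts handled correctly, and the small-$K$ degeneracies (excluding $K=1$, and noting that for $K=2$ the variable $W$ is a point mass). The conceptual content is light — symmetry pins the critical point at $q=\tfrac12$, and the absence of any other critical point reduces to the single sign inequality $\psi(w+1)-\psi(w)<0$ — so there is no substantive obstacle beyond that.
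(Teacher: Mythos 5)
Your proof is correct and follows essentially the same route as the paper's Appendix A: reduce to the Bernstein polynomial $f(q)=\E\bigl[\log\binom{K}{Z_q}\bigr]$, show strict concavity from the negative second differences of $m\mapsto\log\binom{K}{m}$, and combine with the symmetry $f(q)=f(1-q)$ to pin the maximizer at $q=\tfrac12$, i.e.\ $t^{*}=(K/\lambda)\log 2$. The only difference is that you derive explicitly the Bernstein second-derivative identity (and verify the sign of the second difference by the computation $(w+2)(K-w)-(K-w-1)(w+1)=K+1$) where the paper cites this as a standard property of the Bernstein operator.
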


$t^*$ is the time at which the ancillary quantity $H(S\mid\tau)$ is maximized, and indicates the intermediate stage in which ``when only internal time is known, the candidates for the list of observed codes are the most numerous.'' The main result (slowdown and saturation) is a long-time-scale structure that includes the behavior as $t\to\infty$, and $t^*$ quantifies a different viewpoint (identification uncertainty of the reduction).

\subsection{Uniform distribution: asymptotic evaluation (Stirling + concentration)}
In the uniform case, $\tau\sim\mathrm{Bin}(K,q)$, $q=q(t)=1-e^{-\lambda t/K}$.
Let $m=Kx$. By Stirling approximation for the binomial coefficient (as a standard probability text, \cite{Feller1957}),
\begin{equation}
    \log\binom{K}{Kx}
=K\,\hb(x)-\frac12\log\!\bigl(2\pi K x(1-x)\bigr)+o(1),
\end{equation}
\begin{equation}
 \hb(x):=-x\log x-(1-x)\log(1-x).   
\end{equation}
Furthermore, since $\tau$ concentrates around mean $\mu=Kq$ and variance $\sigma^2=Kq(1-q)$, applying a second-order expansion (delta method) to the concave function $f(m)=\log\binom{K}{m}$ yields
\begin{equation}
H(S(t)\mid\tau(t))=\E[f(\tau)]
= f(\mu)+\frac12 f''(\mu)\sigma^2+o(1).
\end{equation}
Using $f''(\mu)\approx-\{(Kq)^{-1}+(K(1-q))^{-1}\}=-1/(Kq(1-q))$,
\begin{equation}
H(S(t)\mid\tau(t))
=K\,\hb(q)
-\frac12\log\!\bigl(2\pi K q(1-q)\bigr)
-\frac12
+o(1),
\end{equation}

 $(K\to\infty,\ q\in(\varepsilon,1-\varepsilon))$. 

In the later stage $t\to\infty$ for finite $K$, $q\to 1$ and $\tau\to K$ concentrate, 
and since $\log\binom{K}{\tau}\to 0$, we obtain $H(S(t)\mid\tau(t))\to 0$.

\section{Non-uniform distributions: upper bounds, equality conditions, gap representation, and approximation}
\label{sec:nonuniform}

In this section we organize how $H(S(t)\mid\tau(t))$ changes when the code distribution is non-uniform.
Here it is again important that this section concerns properties of an ancillary quantity, not the main result (slowdown and saturation).
The main result itself has already been shown in \S\ref{sec:slowdown}, and non-uniformity appears there as a distortion of $\E[\tau(t)]$ (see \S\ref{sec:subjective_time}).

\subsection{Universal upper bound (model independent)}
For any $m$, under $\tau(t)=m$, the number of candidates that $S(t)$ can take is $\binom{K}{m}$.
Therefore,
\begin{equation}
H(S(t)\mid\tau(t)=m)\le \log\binom{K}{m},
\end{equation}
and
\begin{equation}
H(S(t)\mid\tau(t))\le \E\!\left[\log\binom{K}{\tau(t)}\right]
\label{eq:upper_universal}
\end{equation}
holds. Equality is possible only when ``candidates are equally likely under $\tau(t)=m$.''

\subsection{Equality condition under Poisson arrivals}
Assume Poisson arrivals and an i.i.d.\ code sequence.
Then
\begin{equation}
N_i(t)\sim\mathrm{Poisson}(\lambda p_i t)\ \text{independently},
\end{equation}
\begin{equation}
q_i(t):=\Prb(i\in S(t))=1-e^{-\lambda p_i t}.
\end{equation}
Thus $I_i(t)=\mathbf{1}\{i\in S(t)\}$ are independent Bernoulli$(q_i)$.
For any $A\subseteq\{1,\dots,K\}$,
\[
\Prb(S(t)=A)=\prod_{i\in A} q_i\prod_{j\notin A}(1-q_j),
\]
so conditioning on $\tau(t)=m$,
\[
\Prb(S(t)=A\mid\tau(t)=m)\ \propto\ \prod_{i\in A}\frac{q_i}{1-q_i}.
\]

\begin{theorem}[Equality condition: the uniform distribution is the unique achiever (Poisson arrivals)]
Under Poisson arrivals, for any $m\in\{1,\dots,K-1\}$,
\[
H(S(t)\mid\tau(t)=m)=\log\binom{K}{m}
\]
holds if and only if $q_1(t)=\cdots=q_K(t)$.
Furthermore, by monotonicity of $q_i(t)=1-e^{-\lambda p_i t}$, this is equivalent to
\[
p_1=\cdots=p_K=1/K.
\]
\end{theorem}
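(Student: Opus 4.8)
The plan is to view the conditional law of $S(t)$ given $\tau(t)=m$ as a probability distribution supported on the $\binom{K}{m}$ many $m$-element subsets of $\{1,\dots,K\}$, and to combine the universal upper bound \eqref{eq:upper_universal} (the entropy of a distribution on $\binom{K}{m}$ atoms is at most $\log\binom{K}{m}$, with equality precisely when the distribution is uniform) with an analysis of when this particular conditional law is uniform. From the display immediately preceding the statement, under Poisson arrivals the indicators $I_i(t)$ are independent Bernoulli$(q_i)$, so for every $A$ with $|A|=m$,
\[
\Prb\bigl(S(t)=A\mid\tau(t)=m\bigr)=\frac{\prod_{i\in A}w_i}{\sum_{|B|=m}\prod_{i\in B}w_i},\qquad w_i:=\frac{q_i(t)}{1-q_i(t)}>0
\]
(here $0<q_i(t)<1$ for $t>0$, using $p_i>0$). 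Hence equality in $H(S(t)\mid\tau(t)=m)\le\log\binom{K}{m}$ holds if and only if $\prod_{i\in A}w_i$ takes the same value for all $m$-subsets $A$.

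The key step is a combinatorial lemma: for $1\le m\le K-1$, the map $A\mapsto\prod_{i\in A}w_i$ is constant over $m$-subsets if and only if $w_1=\cdots=w_K$. The ``if'' direction is immediate. For the converse, fix any two distinct indices $i,j$; since $m-1\le K-2$ we may pick an $(m-1)$-subset $B\subseteq\{1,\dots,K\}\setminus\{i,j\}$, and then $A=B\cup\{i\}$ and $A'=B\cup\{j\}$ are both $m$-subsets, so constancy gives $w_i\prod_{k\in B}w_k=w_j\prod_{k\in B}w_k$; dividing by the strictly positive number $\prod_{k\in B}w_k$ yields $w_i=w_j$. As $i,j$ were arbitrary, all $w_i$ coincide. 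This is exactly where the hypothesis $m\in\{1,\dots,K-1\}$ is needed — it guarantees $K-(m-1)\ge2$ so that the required $(m-1)$-subset avoiding a given pair exists; for $m\in\{0,K\}$ the conditional law is degenerate and the entropy vanishes identically, so no information about $p$ is obtained.

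It remains to translate ``all $w_i$ equal'' back to the code distribution. Since $x\mapsto x/(1-x)$ is strictly increasing on $(0,1)$, the $w_i$ are all equal iff the $q_i(t)$ are all equal; and since $s\mapsto 1-e^{-\lambda s t}$ is strictly increasing (because $\lambda t>0$), the $q_i(t)$ are all equal iff the $\lambda p_i t$ are all equal iff the $p_i$ are all equal, which together with $\sum_i p_i=1$ forces $p_i=1/K$. Conversely, if $p_i\equiv1/K$ then the $q_i(t)$ agree, the conditional law above is uniform over $m$-subsets, and $H(S(t)\mid\tau(t)=m)=\log\binom{K}{m}$, recovering \eqref{eq:cond_uniform}. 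Assembling these equivalences gives both halves of the statement. The only genuine obstacle is the combinatorial lemma, but as shown it collapses to the one-line ``adjacent $m$-subsets differing in a single element'' argument once the cardinality condition $K-(m-1)\ge2$ is checked.
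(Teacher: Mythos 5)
Your proposal is correct and follows essentially the route the paper intends: the paper sets up $\Prb(S(t)=A\mid\tau(t)=m)\propto\prod_{i\in A}q_i/(1-q_i)$ and leaves the rest implicit, and you supply exactly the missing pieces — the equality condition for the entropy bound and the combinatorial lemma that constancy of $\prod_{i\in A}w_i$ over $m$-subsets (for $1\le m\le K-1$) forces all $w_i$ equal. Your explicit check that the hypothesis $m\le K-1$ is what makes the pair-swapping argument work is a worthwhile detail the paper omits.
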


The upper bound \eqref{eq:upper_universal} is achieved when ``candidates are spread as equally likely as possible.''
Under Poisson arrivals, such a situation occurs only when all $q_i(t)$ are equal, which is equivalent to uniformity of $p_i$.
Hence for non-uniform distributions, candidates are biased when only internal time is known, and the conditional entropy is smaller than in the uniform case.

\subsection{Gap from the upper bound: conditional KL representation}
Define the difference from the upper bound by
\[
G(t):=\E\!\left[\log\binom{K}{\tau(t)}\right]-H(S(t)\mid\tau(t)).
\]
Let $U_m$ be the equiprobable distribution over ``candidates obtained by choosing $m$ from $K$,''
and let $P_m(A):=\Prb(S(t)=A\mid\tau(t)=m)$. Then
\[
\log\binom{K}{m}-H(S\mid\tau=m)=\KL(P_m\|U_m),
\]
hence
\begin{equation}
G(t)=\E\!\left[\KL(P_{\tau(t)}\|U_{\tau(t)})\right]\ge 0.
\label{eq:gapKL}
\end{equation}

The gap $G(t)$ measures, on average, ``how far candidates deviate from being equally likely.''
For non-uniform distributions, $G(t)>0$ because easily observed codes tend to enter the list of observed codes, producing bias toward combinations that are easy to enter.

\subsection{Decomposition identity and approximation (Poisson-binomial)}
Under Poisson arrivals, since $I_i\sim\mathrm{Bernoulli}(q_i(t))$ are independent,
\[
H(S(t))=H(I_1,\dots,I_K)=\sum_{i=1}^K \hb(q_i(t)).
\]
From the chain rule $H(S)=H(\tau)+H(S\mid\tau)$ and \eqref{eq:identity}, we have:

\begin{theorem}[Decomposition identity]
Under Poisson arrivals,
\begin{equation}
H(S(t)\mid\tau(t))=\sum_{i=1}^K \hb(q_i(t)) - H(\tau(t)),
\qquad q_i(t)=1-e^{-\lambda p_i t}.
\label{eq:decomp}
\end{equation}
\end{theorem}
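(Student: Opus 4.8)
The plan is to obtain the identity from two ingredients that hold specifically under Poisson arrivals: the mutual independence of the coordinate indicators, and the fact that $\tau(t)$ is a deterministic function of $S(t)$. First I would invoke Poisson splitting, exactly as in the preceding subsections, to record that $N_i(t)\sim\mathrm{Poisson}(\lambda p_i t)$ are independent, so the indicators $I_i(t)=\mathbf 1\{i\in S(t)\}$ are independent $\mathrm{Bernoulli}(q_i(t))$ with $q_i(t)=1-e^{-\lambda p_i t}$. Since the list $S(t)$ and the bit vector $I(t)=(I_1(t),\dots,I_K(t))$ determine one another, $H(S(t))=H(I_1(t),\dots,I_K(t))$, and additivity of Shannon entropy over independent components gives $H(S(t))=\sum_{i=1}^K H(I_i(t))=\sum_{i=1}^K \hb(q_i(t))$.

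Next I would apply the chain rule to the pair $(S(t),\tau(t))$. Because $\tau(t)=\sum_i I_i(t)$ is a function of $S(t)$, we have $H(\tau(t)\mid S(t))=0$, hence $H(S(t),\tau(t))=H(S(t))$; expanding the joint entropy in the other order gives $H(S(t),\tau(t))=H(\tau(t))+H(S(t)\mid\tau(t))$. Equating the two expansions yields $H(S(t))=H(\tau(t))+H(S(t)\mid\tau(t))$, which is just the specialization of the general reduction identity \eqref{eq:identity} to the present setting. Substituting $H(S(t))=\sum_{i=1}^K \hb(q_i(t))$ from the first step and rearranging gives the claimed $H(S(t)\mid\tau(t))=\sum_{i=1}^K \hb(q_i(t))-H(\tau(t))$.

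The argument is short, and I do not expect a substantive obstacle; the only two points needing care are (i) that the step $H(I_1,\dots,I_K)=\sum_i \hb(q_i)$ genuinely relies on independence, which is precisely what Poisson splitting provides and which would fail for a general arrival process or for dependent codes, and (ii) that since $\cX$ is finite the variable $S(t)$ takes finitely many values, so all entropies involved are finite and the rearrangement of the chain rule is unambiguous. I would leave $H(\tau(t))$ in implicit form, since $\tau(t)$ is Poisson-binomial and admits no simpler closed expression in general (it reduces to the binomial entropy in the uniform case, recovering \eqref{eq:U_uniform}).
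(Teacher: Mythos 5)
Your proof is correct and follows essentially the same route as the paper: Poisson splitting gives independent Bernoulli indicators so that $H(S(t))=\sum_i \hb(q_i(t))$, and the chain rule with $\tau(t)$ a function of $S(t)$ yields $H(S)=H(\tau)+H(S\mid\tau)$, from which the identity follows by rearrangement. Your added remarks on where independence is genuinely needed and on finiteness of the entropies are sensible but not a departure from the paper's argument.
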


Here $\tau(t)=\sum I_i$ has a Poisson-binomial distribution (computations and approximations: \cite{Ehm1991,Hong2013}).

\begin{equation}
    \mu(t)=\E[\tau(t)]=\sum q_i(t),
\end{equation}
\qquad\begin{equation}
\sigma^2(t)=\Var(\tau(t))=\sum q_i(t)(1-q_i(t)).
\end{equation}
In a region where $\sigma^2(t)$ is sufficiently large,
\[
H(\tau(t))\approx \frac12\log(2\pi e\,\sigma^2(t))
\]
can be used as an approximation \cite{CoverThomas2006}. Therefore,
\begin{equation}
H(S(t)\mid\tau(t))\approx \sum_{i=1}^K \hb(q_i(t))-\frac12\log(2\pi e\,\sigma^2(t)).
\label{eq:approx_normal}
\end{equation}

In an early stage where $\lambda t$ is small, since $q_i(t)\approx \lambda p_i t$,
\[
\sum_{i=1}^K \hb(q_i(t))
\approx \lambda t\Bigl[\log\frac{1}{\lambda t}+1+H(p)\Bigr],
\qquad
\sigma^2(t)\approx \lambda t
\]
is obtained, and $H(p)$ appears in the initial growth rate.

\subsection{Maximization time}
\paragraph{Uniform distribution (exact)}
In the uniform case, $H(S(t)\mid\tau(t))$ is unimodal, and the maximization time is given by \eqref{eq:tstar}.

\paragraph{Non-uniform distributions (no closed form in general)}
For non-uniform distributions, the maximization time generally has no closed form.
One maximizes \eqref{eq:decomp} directly, or identifies it approximately by the first-order condition of the approximation \eqref{eq:approx_normal}.

\section{Discussion}
\label{sec:discussion}

In this section we organize, from both intuition and equations, from which assumptions the main result ``slowdown and saturation of internal time'' arises and by which features of input statistics it is controlled. We also describe the meaning of the long tail produced by heavy-tailed (Zipf-law) input, and the positioning of the generalization by weighted internal time. For the conditional entropy introduced as an ancillary quantity, we clarify that it is not the core of this paper, and describe in what range it is useful as an auxiliary index.

\subsection{Reconfirming the core of this paper: slowdown and saturation are the conclusion}
The central conclusion of this paper is that internal time $\tau(t)$ (the number of kinds of observed codes) ticked by a response system is regulated by the input statistics (arrival rate $\lambda$ and code distribution $p$), that the mean advancing speed decreases monotonically with time, and that it saturates under finite $K$. This conclusion follows directly in the standard setting of Poisson arrivals and an independent and identically distributed code sequence: the mean internal time is given by \eqref{eq:Etau_general}, and its time derivative (mean advancing speed) has the form \eqref{eq:speed_general}. That is, the advancing speed is a sum of $p_i e^{-\lambda p_i t}$, and since each term decreases monotonically with $t$, the mean advancing speed necessarily slows down; moreover, since $q_i(t)\to 1$, $\E[\tau(t)]\to K$ and saturates.

In this sense, the statement that ``internal time becomes relatively shorter than physical time'' is not a metaphor, but can be expressed as the rigorous consequence
$\lim_{t\to\infty}\E[\tau(t)]/t=0$ that follows from boundedness of $\E[\tau(t)]$.
This conclusion holds for a general distribution $p$, and each distribution changes ``how it slows down and how the tail remains.''

\subsection{Input statistics determine the slowdown curve: $\lambda$ is the time scale and $p$ is the shape}
From \eqref{eq:speed_general}, the arrival rate $\lambda$ mainly determines the ``time scale.''
For example, if one nondimensionalizes time by $u=\lambda t$, the basic form of slowdown is described by $u$.
On the other hand, the code distribution $p$ determines the \emph{shape} of slowdown.
Since high-probability codes are likely to be observed early and low-probability codes tend to remain unobserved for a long time, the allocation of ``early rise,'' ``middle-stage growth,'' and ``late-stage tail'' of $\E[\tau(t)]$ changes with $p$.
This is an advantage of the present model in the sense that, by restricting internal time to the minimal record of ``number of kinds,'' input statistics become directly visible.

In the uniform case, moreover, the correspondence between physical time and internal time can be written in closed form.
The inverse mapping \eqref{eq:conversion} shows that the physical time required to achieve internal-time level $\tau$ diverges as $\tau\to K$, and the increment \eqref{eq:dt_phys} gives concretely that ``the physical time to advance internal time by one unit'' increases in later stages.
This expression that ``required physical time increases in later stages'' conveys the content of slowdown and saturation most intuitively.

\subsection{Heavy-tailed (Zipf-law) input: a long tail is not an ``exception'' but a consequence of statistics}
For heavy-tailed distributions, since there are many rare codes, the tail toward saturation becomes long in later stages.
In this paper we considered the Zipf law as a representative example (overview \cite{Newman2005}, original \cite{Zipf1949}).
Even under the truncated Zipf distribution \eqref{eq:zipf}, the mean internal time is given as a concrete form of the general formula \eqref{eq:Etau_general} by \eqref{eq:Etau_zipf}, and the core itself (slowdown and saturation) does not change.
What changes is the \emph{time allocation} toward saturation.

In particular, under the Zipf law, since there are many codes with small probabilities,
the phenomenon that ``unobserved codes remain even after quite a long time'' naturally arises.
This implies that, even for finite $K$, saturation can be extremely delayed, and can serve as a minimal model to explain the impression in real data (natural language, behavioral sequences, etc.) that ``new things never fully run out.''
Since estimation and truncation handling for heavy-tailed distributions require statistical care, the discussion in \cite{ClausetShaliziNewman2009} is useful when applying to real data.

\subsection{Weighted internal time: a natural generalization of the number-of-kinds model}
In this paper we adopted $\tau(t)$ (number of kinds) as the minimal definition of internal time, but to incorporate the standard viewpoint that ``rarer codes carry larger information,'' the weighted internal time $T(t)$ is natural (\S\ref{sec:weighted}).
In particular, assuming \eqref{eq:coding_relation} between weights $\ell_i$ and distribution $p_i$, $\ell_i$ is consistent with self-information $-\log p_i$, and $T(t)$ can be interpreted as a representative quantity of ``the amount of information retained by the list of observed codes.''
This consistency is based on basic principles of information theory \cite{Shannon1948,CoverThomas2006}, and provides a consistent generalization that includes the number-of-kinds model ($\tau$).

The mean of weighted internal time is given by \eqref{eq:ET_general} and its advancing speed by \eqref{eq:ET_speed}, and similarly to $\tau(t)$, it slows down in mean and saturates for finite $K$.
However, since the contribution is larger when a rare code is observed in a later stage, for heavy-tailed input, the ``later-stage tail'' of $T(t)$ may be emphasized more strongly than that of $\tau(t)$.
This point is important in situations where one wants to define internal time not only by number of kinds but also as an amount of information.

\subsection{Positioning of the ancillary quantity (conditional entropy): what the reduction to internal time drops}
The conditional entropy $H(S(t)\mid\tau(t))$ is not an assumption nor a conclusion that supports the core of this paper (slowdown and saturation), but an auxiliary quantity that measures ``how much uncertainty remains in the list of observed codes when only internal time is known.''
The identity \eqref{eq:identity} shows that the reduction difference $U(t)$ equals this conditional entropy, organizing the information dropped by the reduction to internal time from the basic structure of the chain rule of entropy \cite{Shannon1948,CoverThomas2006}.

In the uniform case, candidates become equiprobable, yielding the closed form \eqref{eq:U_uniform}, and unimodality and the maximization time \eqref{eq:tstar} follow.
In the non-uniform case, the universal upper bound \eqref{eq:upper_universal} and equality conditions can be organized, and the difference from the upper bound is expressed as the mean KL divergence \eqref{eq:gapKL}.
This section is not intended to replace ``slowdown and saturation,'' but is positioned as an auxiliary result that quantifies what becomes indistinguishable when adopting internal time as a minimal record.

\subsection{Limitations and extensions: scope of the minimal model}
Since this paper intentionally adheres to a minimal model, it has the following limitations.
First, internal time is defined by ``the number of kinds of observed codes (or a weighted sum),'' and it does not retain more detailed history such as frequencies, order, or inter-arrival intervals.
Second, the arrival process is restricted to Poisson for analysis.
Third, correspondence to external psychological indices or behavioral indices is not assumed.

On the other hand, directions of extension are clear.
(i) Extend the arrival process to non-Poisson (periodicity, self-exciting, etc.) and investigate the universality of slowdown and how it breaks (general theory of point processes: \cite{DaleyVereJones2003,Kingman1993}).
(ii) Slightly extend the internal state to include frequency information or recent history, design an internal time that includes such information, and investigate how it connects to the slowdown and saturation in this paper.
(iii) Consider heavy-tail limits and $K\to\infty$ and organize scaling in regions where saturation is effectively not observed.
These are natural tasks to examine to what extent the core of slowdown and saturation of internal time depending on input statistics holds generally.

\section{Conclusion}
This paper presented, for a response system that updates its internal state according to information input, a minimal model that defines as internal time the number of kinds of observed codes, and rigorously derived slowdown and saturation of internal time according to the statistics of information input.
Under Poisson arrivals, the mean advancing speed of internal time decreases monotonically, and if the number of kinds of codes is finite it saturates; therefore, on long time scales internal time becomes relatively shorter than physical time.
In the uniform case, we provided the inverse relation between physical time and internal time and explicitly showed that internal time becomes harder to advance in later stages.
As a generalization, we also introduced code-dependent weights (description lengths) and provided a formulation of internal time according to information amount.

As an ancillary quantity, we introduced conditional entropy and quantified the uncertainty that remains when only internal time is known.
We organized the closed form, unimodality, and maximization time in the uniform case, and upper bounds, equality conditions, gap representations, and approximate evaluations in the non-uniform case.
These are independent of the main result of slowdown and saturation, but they provide an auxiliary view of aspects of the information dropped by the reduction to internal time.

\appendix
\section{Appendix A: unimodality and maximizer in the uniform distribution}
\label{app:unimodality}

In the uniform case, $\tau\sim \mathrm{Bin}(K,q)$, and
\begin{equation}
U(q)=H(S\mid\tau)=\E\!\left[\log\binom{K}{\tau}\right]
\end{equation}
\begin{equation}
=\sum_{m=0}^K \binom{K}{m} q^m(1-q)^{K-m}\log\binom{K}{m}.
\label{eq:bernstein}
\end{equation}
This is the Bernstein polynomial of the discrete concave function $f(m)=\log\binom{K}{m}$.

\paragraph{Symmetry}
From $\binom{K}{m}=\binom{K}{K-m}$ and symmetry of the binomial weights, \eqref{eq:bernstein} implies $U(q)=U(1-q)$.

\paragraph{Strict concavity (outline)}
The sequence $m\mapsto \log\binom{K}{m}$ is discretely strictly concave on $m=0,\dots,K$, and its second difference is negative for $1\le m\le K-1$.
As a standard property of the Bernstein operator, the second derivative of the Bernstein polynomial is represented as a positively weighted average of the second differences (see \cite{Lorentz1986}).
Therefore $U''(q)<0$ holds for $q\in(0,1)$, and $U(q)$ is strictly concave on $(0,1)$.
Hence it is unimodal.

\paragraph{Maximizer}
By strict concavity and symmetry with respect to $q=1/2$, the maximizer is uniquely $q^*=1/2$.
Since $q(t)=1-e^{-\lambda t/K}$,
\[
1-e^{-\lambda t^*/K} = \frac12,
\qquad
t^* = \frac{K}{\lambda}\log 2.
\]

\bibliographystyle{plainnat}
\bibliography{references}

\end{document}